\def\zibreport{1}\def\usecolors{1}
\renewcommand\AB@authnote[1]{\rlap{\textsuperscript{\normalfont#1}}}
\definecolor{darkgreen}{HTML}{009900}
\newtheorem{Def}{Definition}
\newtheorem{Rem}[Def]{Remark}
\newtheorem{Thm}[Def]{Theorem}
\newcommand{\ie}{i.e.,\xspace}
\newcommand{\eg}{e.g.,\xspace}
\newcommand{\cf}{cf.\xspace}
\newcommand{\bandb}{branch-and-bound\xspace}
\newcommand{\NP}{\mathcal{NP}}
\newcommand{\Norm}[1]{\Vert #1\Vert}
\newcommand{\floor}[1]{\lfloor #1\rfloor}
\newcommand{\ceil}[1]{\lceil #1\rceil}
\newcommand{\fa}{\text{for all }}
\DeclareMathOperator{\diag}{diag}
\newcommand{\N}{\mathbb{N}\xspace}
\newcommand{\Z}{\mathbb{Z}\xspace}
\newcommand{\R}{\mathbb{R}\xspace}
\newcommand{\TransMat}{P\xspace}
\newcommand{\Bins}{\mathcal{B}\xspace}
\newcommand{\NetFlow}{f}
\newcommand{\Coherence}{g}
\newcommand{\Cluster}{C\xspace}
\newcommand{\Clusters}{\mathcal{C}\xspace}
\newcommand{\coh}{\Coherence\xspace}
\renewcommand{\algorithmicrequire}{\textbf{Input:}}
\renewcommand{\algorithmicensure}{\textbf{Output:}}
\newcommand{\ourtitle}{Mixed-Integer Programming for Cycle Detection in Non-reversible Markov Processes}
\theoremstyle{definition}
\newtheorem{example}{Example}[section]
 \let\pdfoutorg\pdfoutput
  \let\pdfoutput\undefined
  \let\pdfoutput\pdfoutorg
\begin{document}

\title{\ourtitle\thanks{A version of this paper is submitted to Multiscale Modeling and Simulation: A SIAM Interdisciplinary Journal. Date of submission: 25.08.2016.}}

\author[1]{Isabel~Beckenbach}
\author[1]{Leon~Eifler}
\author[2]{Konstantin~Fackeldey}
\author[1]{Ambros~Gleixner}
\author[2]{Andreas~Grever}
\author[2]{Marcus~Weber}
\author[1]{Jakob~Witzig}

\affil[1]{Zuse Institute Berlin, Department Optimization, Takustr.~7, 14195~Berlin, Germany, \texttt{\{beckenbach,eifler,gleixner,witzig\}@zib.de}\medskip}
\affil[2]{Zuse Institute Berlin, Department Numerical Mathematics, Takustr.~7, 14195~Berlin, Germany, \texttt{\{fackeldey,grever,weber\}@zib.de}\medskip}

\ifthenelse{\zibreport = 1}{\zibtitlepage}{}

\maketitle

\begin{abstract}
  In this paper, we present a new, optimization-based method to exhibit cyclic
  behavior in non-reversible stochastic processes.  While our method is general,
  it is strongly motivated by discrete simulations of ordinary differential
  equations representing non-reversible biological processes, in particular
  molecular simulations.  Here, the discrete time steps of the simulation are
  often very small compared to the time scale of interest, i.e., of the whole
  process.  In this setting, the detection of a global cyclic behavior of the
  process becomes difficult because transitions between individual states may
  appear almost reversible on the small time scale of the simulation.

  We address this difficulty using a mixed-integer programming model that allows
  us to compute a cycle of clusters with maximum net flow, i.e., large forward
  and small backward probability. For a synthetic genetic regulatory network
  consisting of a ring-oscillator with three genes, we show that this approach can detect the most productive overall cycle,
  outperforming classical spectral analysis methods.  Our method applies to
  general non-equilibrium steady state systems such as catalytic reactions, for
  which the objective value computes the effectiveness of the catalyst.
\end{abstract}


\section{Introduction}

Simulation data stemming from chemical or biological processes typically lead to a huge amount of data points (time series) 
in some high dimensional space. Often a direct interpretation of these data for prediction or understanding of
the underlying physical process is almost impossible due to the range of spatial and temporal scales. 
This led to the development of coarse graining methods that provide relevant information of the system on a level with less complexity.  

One example widely used in the context of biological and chemical processes are Markov State Models (MSM)~\cite{BrownPandeNoe2014,DHFS_00,DW_05,PSK_11}.
In a MSM the underlying long time series is described by a Markov chain on some low dimensional space, \ie there exists a stochastic transition matrix $P$, whose entries $p_{ij}$ can be interpreted as the portion of the system that will transit from state~$i$ to state~$j$ in one time step.
If the vector~$v(t)$ represents a distribution at time step~$t$, the matrix vector 
multiplication $v(t)^TP=v(t+1)^T$ is a propagation of that distribution 
for one time step. The stationary distribution meets the condition $\pi^T=\pi^T P$, which means that $\pi$ is a steady state.

A Markov chain with $n$ states is called reversible if and only if the detailed balance condition
\begin{equation}\label{equ:detailedbalance}
 \pi_{i}p_{ij}=\pi_{j}p_{ji}
\end{equation}
holds for all $i,j=1,\ldots,n$. 
MSMs are well understood if they are applied to reversible processes, \eg simulation of a molecule in water, for which spectral clustering is a commonly used coarse graining method. 

In practice, however, many chemical and biological processes are not reversible.
If the detailed balance condition is not met for states $i$ and $j$, we can define a \emph{net flow} between these two states by taking the difference $\pi_{i}p_{ij}-\pi_{j}p_{ji}$.
It follows immediately that for $n>2$ we can find cycles of positive flow as long as the process is non-reversible. 

Non-reversible processes with a stationary distribution are sometimes called \emph{non-equilibrium steady state} (NESS) processes (see~\cite{DjurdjevacConradWeberSchuette2016}). A \emph{catalytic process} is one example of a NESS. 
A catalytic process is a chemical process in which the rate of chemical reactions is increased due to the presence of a catalyst. If one considers the (ensemble distribution of) states of the catalyst, they usually undergo a cycle in the conformational space, 
returning to its initial state at the end of the process. Thus,
the catalyst is in a NESS.
Each cycle transforms educts into products as illustrated in Figure~\ref{fig:catalytic-cycle}. The faster the productive cycle, the more effective the catalyst.
For processes with such a behavior we want to find a clustering that maximizes the net flow of the cycle.

Spectral clustering is applied in two cases. If the process consists of $m$ metastabilities, i.e.,
there is a set of states between which jumps only rarely occur, it is assumed that the spectrum of $P$ has $m$ leading eigenvalues (see~\cite{DHFS_00}). If the process consists of a dominant cycle of $m$ states, i.e., there is an ordered set of states and the process jumps to the next state with a high probability, it is assumed that the spectrum of $P$ has $m$ complex eigenvalues close to the unit circle (see~\cite{DjurdjevacConradWeberSchuette2016}).
Though processes involving different time scales like simulation of molecular dynamics are not likely to produce a dominant cycle. To analyze general cyclic behavior a more flexible approach is needed.  

\begin{figure}[ht]
  \centering
  \ifthenelse{\usecolors = 1}{
    \includegraphics[scale=0.4]{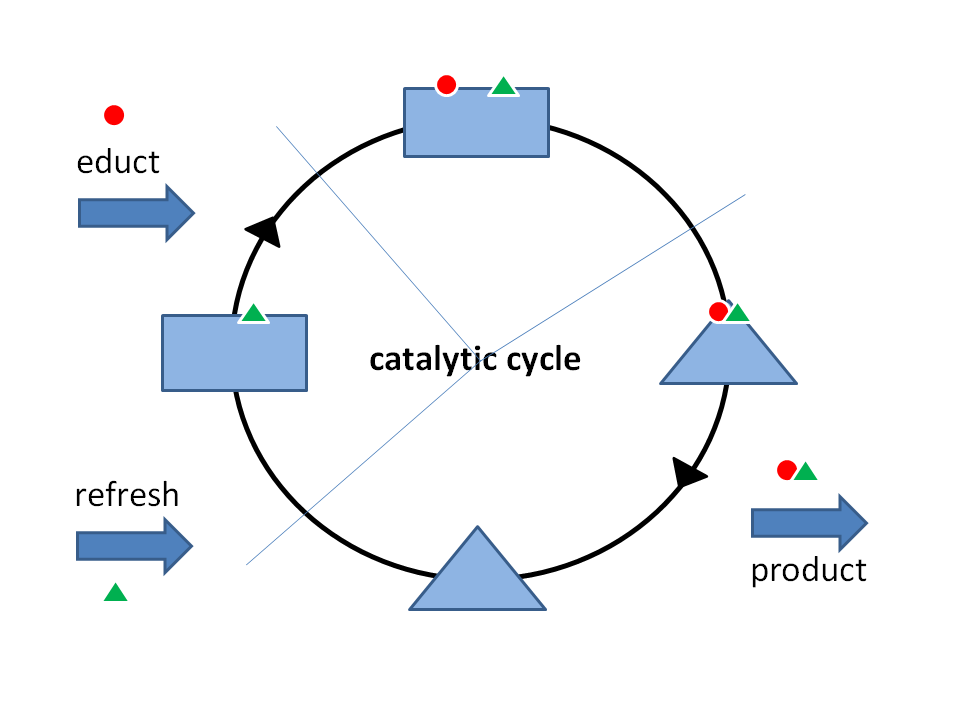}
  }{\includegraphics[scale=0.4]{cycle-sw.png}
  } 
  \caption{In a catalytic process the catalyst always returns to its initial state}
  \label{fig:catalytic-cycle}
\end{figure}

\paragraph{Contribution.}
Our main contribution is two-fold.
First, we develop a new technique to detect cycles in Markov processes called \emph{cycle clustering}.
It is general in the sense that the inherent non-reversibility may not be dominating, \ie the average ``speed'' of the cycle may be small relative to the overall time scale of interest.
In contrast, previous methods such as~\cite{DjurdjevacConradWeberSchuette2016} were designed for dominant cycles only.
The computational results show that we are able to find cycles that are far from being dominant, which makes our method applicable to biological processes like catalysis.
As such, our model could, for instance, be used to compute the effectiveness of the catalyst.

Second, as a more abstract methodological contribution, we model and solve the underlying clustering problem using mixed-integer optimization.
Compared to some classical methods, this approach requires few assumptions on the input data: transition matrices do not need to be reversible and there are no assumptions on its spectrum.
The separation of clustering model and the underlying solution algorithm helps to focus on the semantics of the clustering independently of how they can be computed.
The optimality guarantees given by mixed-integer programming solvers enable us to compute ``best'' clusterings and cycles with respect to the specified objective function.

The article is organized as follows. In Section~\ref{sec:background}, we give brief
introductions to spectral clustering and the current state-of-the-art methods
as well as an overview of mixed-integer programming techniques. In Section~\ref{sec:cycleclustering},
we develop our new, optimization-based method to exhibit cyclic behavior in
non-reversible Markov processes.
Moreover, we describe the underlying mixed-integer programming model and prove its complexity.
In Section~\ref{sec:results}, we investigate the viability of our approach by computational
experiments both on small-scale, artificially created instances and
the repressilator system from~\cite{ElowitzLeibler2000}.
In Section~\ref{sec:conclusion}, we give concluding remarks.

\section{Background}
\label{sec:background}

The clustering method presented in this paper is based on mixed-integer optimization.  As such it distinguishes itself from the spectral approach to cluster analysis most prevalent in molecular dynamics.
In the following, we will briefly describe the main ideas of spectral clustering and mixed-integer programming.

\subsection{Spectral Clustering}
\label{subsec:spectralclustering}

Spectral clustering is a common method to identify metastable sets in Markov State Models.
Given an undirected graph $G:=(V,E)$ defined by a set of $n$ vertices $V$, a set of edges $E \subseteq \binom{V}{2}$,
and a weighted adjacency matrix $W \in [0,1]^{n \times n}$, where $w_{ij} = w_{ji} > 0$ if and only if there exists an edge~$e_{ij} \in E$.
Moreover, let $D$ be a diagonal degree matrix, where $d_{ii}=\sum_{j}w_{ij} > 0$ for all $i \in V$.
The transition matrix $P$ that describes a random walk on $G$ is given by $P:=D^{-1}W$, \ie the probability to jump from node $i$ to $j$
is given by $\frac{w_{ij}}{d_{ii}}$. If the graph is connected and non-bipartite, there exists a unique stationary distribution vector $\pi \in [0,1]^n$, such that $\pi^{T}P=\pi^{T}$ (\cf~\cite{vonLuxburg2006}).

For each set of nodes $C \subseteq V$, the cut between $C$ and its complement $V \setminus C$ 
is defined by the set of edges $e_{ij} \in E$ with $i \in C$ and $j \in V \setminus C$. The weight of the cut between $C$ and $V \setminus C$ is defined by 
$\omega(C)=\sum_{i \in C, j \in V \setminus C} w_{ij}$. Now, the aim is to find a partition $C_{1},\ldots,C_{k}$ of $V$ 
such that the edges between clusters have small weight. A min-cut problem can be formulated as minimizing
\begin{align}
  \text{cut}(C_{1},\ldots,C_{k}):=\frac{1}{2}\sum_{i=1}^{k}\omega(C_{i},V \setminus C_{i}). \label{eq:mincut}
\end{align}
In practice, similar objective functions like RatioCut~\cite{hagen1992new} or Ncut~\cite{shi2000normalized} are used.
More general information on spectral clustering can be found in~\cite{vonLuxburg2006}.

A relaxation of the optimization problem leads to an eigenvalue problem for the first $k$ eigenvectors of a graph Laplacian $L=I-D^{-1}W=I-P$.
Eigenvectors corresponding 
to eigenvalues that are close to the Perron root $\lambda_1=1$
are the 
basis vectors of a special invariant subspace of $P$. There exists a 
transformed basis of this invariant subspace with the following 
property. The corresponding transformed basis vectors can be interpreted 
as the membership vectors of the metastable subsets of the state space. 
The method which finds this linear basis transformation is called PCCA+~\cite{DW_05}.

The analysis using a spectral approach is well understood when it comes to reversible Markov chains, \ie when the evolution of the process is invariant under time reversal. If the process is non-reversible then the weight matrix $W$ is not symmetric and the eigenvectors are not orthogonal. The canonical approach cannot be applied anymore. In~\cite{WeberFackeldey2015} a variation on the PCCA+ algorithm has been developed that can be applied to non-reversible matrices, called \emph{G-PCCA}. Instead of the spectral decomposition, G-PCCA uses the Schur decomposition such that the former follows as a special case.

Along a different line of research, some recent articles have tried to identify dominant structures such as cycles in non-reversible transition matrices, \eg~\cite{DjurdjevacConradBanischSchuette2014,DjurdjevacConradWeberSchuette2016}.
In these methods, the cycles are assumed to be ``dominant'', which means that there is a high probability inside the Markov chain to follow these cycles.
In this setting, complex pairs of eigenvalues of the transition matrices are clearly identifiable.

In constrast, the method presented in this paper aims at finding cycles in non-reversible transition matrices that are not necessarily dominant, but exhibit only rare circular jumps between metastable sets.
This type of transitions is typical for catalytic and other biological processes.
On the timescale of simulation the process is nearly reversible, but it has a small (in terms of probabilities) tendency towards the direction of a cyclic behavior. 
In these cases, non-reversibility is inherent to the process and does not stem from, \eg, truncation errors due to a finite sampling of a reversible Markov chain.
Hence, methods that try to make such input amenable to analytical tools for reversible Markov chains, \eg by computing the nearest reversible matrix as in~\cite{NielsenWeber2015}, are not applicable because they destroy the characteristics of the process.

\subsection{Mixed-Integer Programming}
A \emph{mixed-integer program} (MIP) is an optimization problem that can be written in the form
\begin{align*}
	(P) \qquad\qquad z_{MIP} = \min \{ c^Tx \;|\; Ax \geq b,\, \ell \leq x \leq u,\, x \in \Z^{l} \times \R^{n-l} \},
\end{align*}
with objective function $c \in \R^n$, constraint matrix $A \in \R^{m\times n}$, constraint right-hand side $b \in \R^m$,
and lower and upper bound vectors~$\ell, u \in (\R \cup \{\pm\infty\})^n$  on the variables.
When omitting the integrality conditions, we obtain the \emph{linear program} (LP)
\begin{align*}
	z_{LP} = \min \{ c^Tx \;|\; Ax \geq b,\, \ell \leq x \leq u,\, x \in \R^{n} \}.
\end{align*}
It constitutes a relaxation of the corresponding MIP and provides a lower bound on its optimum, \ie $z_{LP} \leq z_{MIP}$.
This fact plays an important role in the LP-based \bandb algorithm~\cite{dakin1965tree, land1960automatic}, the most widely used general algorithm to solve MIPs to global optimality.

LP-based \bandb is a divide-and-conquer method which starts by solving the LP relaxation of the problem to compute a lower bound and a solution candidate $x^\star$.
If $x^\star$ fulfills the integrality restrictions, the problem is solved to optimality; if not, it is split into (typically two) disjoint subproblems,
thereby removing $x^\star$ from the feasible region of both LPs. 
Typically, an integer variable $x_i$ with fractional solution value $x^\star_i$ is selected and the restrictions
$x_i \geq \ceil{x^{\star}_{i}}$ and $x_i \leq \floor{x^{\star}_{i}}$ are added to the two subproblems, respectively.
This step is called branching. As this process is iterated, we store and update the best solution $\tilde{x}$ found so far whenever one of the subproblems has an integral LP~solution.

The key observation is that a subproblem can be disregarded when its lower bound is greater or equal than the objective value of $\tilde{x}$. This is called bounding.
The branch-and-bound process is typically illustrated as a tree, \cf~Figure~\ref{fig:branchandbound}.
The root node represents the original problem and the two subproblems created by the branching step correspond to two child nodes being created for the current node.

In modern MIP solvers the general \bandb scheme is extended by various algorithms to enhance the performance, see, \eg~\cite{achterberg2008constraint, applegate1998solution, savelsbergh1994preprocessing} and many more.  Nevertheless, mixed-integer programming is complex both in theory ($\mathcal{NP}$-hard, see, \eg~\cite{GareyJohnson1979}) and in practice: As explained, state-of-the-art solvers eventually rely on enumerative search over an exponentially large solution space and may converge slowly.
However, even when terminated early for hard problem instances, they typically provide good solutions and give proven guarantees on the quality of the solutions returned.

\begin{figure}
    \centering
    \scalebox{.95}{
      \begin{tikzpicture}[
  treenode/.style = {shape=circle, rounded corners, draw, align=center, inner sep=.5, minimum width=2em},
  env/.style      = {treenode, font=\ttfamily\normalsize},
  level distance=1.25cm,sibling distance=2.5cm] 
  \tikzstyle{level 1}=[sibling distance=62mm] 
  \tikzstyle{level 2}=[sibling distance=28mm] 
  \tikzstyle{level 3}=[sibling distance=20mm] 
  \node[env]{$P$} 
    child{ 
      node [env]{$P_1$} 
      child {
	node [env]{$P_3$} 
	child {
	  node [env, dashed, thick]{$P_5$}
	  edge from parent node[pos=.3, fill=white, inner sep=0, anchor=east] {$x_{i_3} \leq \floor{x^{\star, P_5}_{i_3}}$}
	} 
	child{ 
	  node [env, dotted, thick]{$P_6$} 
	  edge from parent node[pos=.3, fill=white, inner sep=0, anchor=west] {$x_{i_3} \geq \ceil{x^{\star, P_5}_{i_3}}$}
	}
	edge from parent node[pos=.3, fill=white, inner sep=0, anchor=east] {$x_{i_2} \leq \floor{x^{\star, P_1}_{i_2}}$}
      } 
      child{ 
	node [env, fill=gray!30]{$P_4$}
	edge from parent node[pos=.3, fill=white, inner sep=0, anchor=west] {$x_{i_2} \geq \ceil{x^{\star, P_1}_{i_2}}$}
      }
      edge from parent node[pos=.5, fill=white, inner sep=0] {$x_{i_1} \leq \floor{x^{\star, P}_{i_1}}$}
    }
    child{ 
      node [env]{$P_2$} 
      child {
	node [env, fill=gray!30]{$P_7$} 
	edge from parent node[pos=.3, fill=white, inner sep=0, anchor=east] {$x_{i_4} \leq \floor{x^{\star, P_2}_{i_4}}$}
      } 
      child{ 
	node [env, fill=gray!30]{$P_8$} 
	edge from parent node[pos=.3, fill=white, inner sep=0, anchor=west] {$x_{i_4} \geq \ceil{x^{\star, P_2}_{i_4}}$}
      } 
      edge from parent node[pos=.5, fill=white, inner sep=0] {$x_{i_1} \geq \ceil{x^{\star, P}_{i_1}}$}
  };
\end{tikzpicture}
    }
    \caption{Illustration of a \bandb tree. Solving the LP relaxation of $P$, $P_1$, and $P_3$ led to branching steps on variables $i_1$, $i_2$, and $i_3$, respectively. 
    First feasible solution $\tilde{x}$ obtained by solving the LP relaxation of $P_5$ (dashed node). Subproblem $P_6$ has an infeasible LP relaxation after branching on $i_3$ (dotted node).
    Afterwards, subproblems $P_4$, $P_7$, and $P_8$ can be disregarded due to bounding (gray shaped), \ie the lower bound of each subproblem is not smaller than the objective value of $\tilde{x}$. 
    Since all subproblems are processed, $\tilde{x}$ is an optimal solution of $P$.}
    \label{fig:branchandbound}
\end{figure}
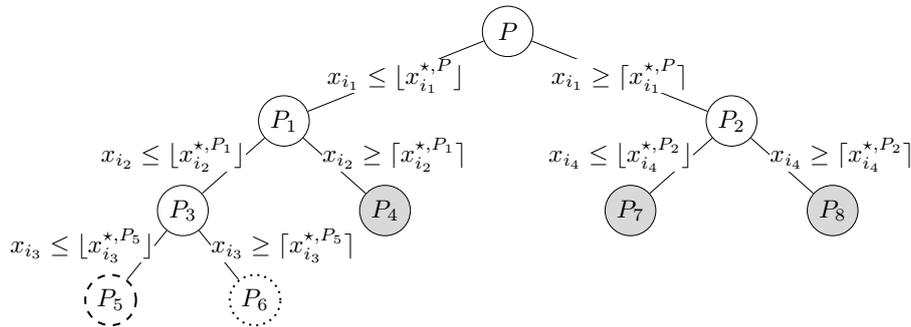

\section{Cycle Clustering}
\label{sec:cycleclustering}

In this section we present a new method for detecting global cyclic behavior of a Markov process by partitioning its state space into ordered clusters.
The technique, which we call \emph{cycle clustering}, can be applied to any discrete-time Markov process over a discrete (or discretized) state space with a stationary distribution.
We show how an ``optimal'' cycle clustering can be computed using algorithms from mixed-integer programming.

\subsection{Setting}

We consider a Markov process with a finite set of states~$\Bins = \{1,\ldots,n\}$.
We will call states also \emph{bins} in order to indicate that they might stem from a discretization of a continuous space as it is common in molecular simulations.
However, this is not an underlying assumption of our method.

Let~$\TransMat \in [0,1]^{n \times n}$ be the matrix of conditional transition probabilities, where $p_{ij}$ is equal to the probability of moving from bin~$i$ to bin~$j$ in one time step when initially in bin~$i$.
Then the only requirements for applying our method are
\begin{enumerate}
\item that~$P$ is right stochastic, \ie all row sums are one, and
\item that~$P$ has a stationary distribution~$\pi$, \ie $\pi \in [0,1]^{n}$ is a left-hand eigenvector for eigenvalue~1, $\pi^T \TransMat = \pi^T$, and~$\sum_{i=1,\ldots,n} \pi_i = 1$.
\end{enumerate}
Using the stationary distribution, we can compute the matrix~$W = \diag(\pi) P$ of \emph{unconditional} transition probabilities, which is the main input data for our method.
The entry~$q_{ij} = \pi_i p_{ij}$ equals the probability or intense of transitions from~$i$ to~$j$ in the whole ensemble of transitions.

\subsection{Net Flow and Coherence}

While in many biological processes such as catalysis it may be intuitively clear how to define cyclic behavior using application-specific interpretations of the state space, it is much less obvious in the abstract formalism of a Markov state model.
Because of the probabilistic aspect it is not meaningful to look for a sequence of states that are visited in one fixed cyclical order.
Moreover, a trajectory may not even visit all states as it completes one ``iteration'' of the cycle.

We address these difficulties by combining cycle detection with clustering.
Our goal is to partition the set of states into a fixed number of clusters and order them in form of a cycle such that with high probability we will encounter
\begin{itemize}
\item transitions from one cluster to the next cluster in cycle direction, or
\item transitions within one cluster, but
\item no transitions between clusters in backward direction.
\end{itemize}
To quantify this, we introduce the following measure of non-reversibility between two sets of states.

\begin{Def}[net flow]
  Given two disjoint sets of states $A, B \subseteq \Bins$, $A \cap B = \emptyset$, we call
  \begin{align*}
    \NetFlow(A,B) := \sum_{i \in A, j \in B} (\pi_{i} p_{ij}-\pi_{j}p_{ji}) = \sum_{i \in A, j \in B} (q_{ij}-q_{ji})
  \end{align*}
  the \emph{net flow} from set~$A$ to set~$B$.
\end{Def}

The net flow~$\NetFlow(A,B)$ corresponds to the portion of particles transiting from~$A$ to~$B$ minus the portion transiting backwards from~$B$ to~$A$ in one time step.
Its value equals the sum of the deviations from the detailed balance conditions~\eqref{equ:detailedbalance}.
The net flow is signed and by definition, $\NetFlow(A,B) = -\NetFlow(B,A)$ for all~$A, B$.

To ensure that, at the same time, clusters contain related groups of states, we use the following definition.

\begin{Def}[coherence]
  Given a set of states $A \subseteq \Bins$, we call
  \begin{align*}
    \Coherence(A) := \sum_{i,j \in A} \pi_{i} p_{ij} = \sum_{i,j \in A} q_{ij} 
  \end{align*}
  the \emph{coherence} of set~$A$.
\end{Def}

The coherence~$\Coherence(A)$ is hence equal to the unconditional probability of residing and remaining within~$A$ given the stationary distribution~$\pi$.
As such, it can be interpreted as a proxy for measuring closeness in the original state space.

\subsection{Clustering Model}
\label{subsec:clusteringmodel}

By an \emph{$m$-cycle clustering} we denote a partitioning of the set of states into $m$~pairwise disjoint clusters,
\begin{equation*}
  \Bins = \bigcup_{k=1}^m \Cluster_k,\, \Cluster_k \cap \Cluster_\ell = \emptyset \text{ for all } k\not=\ell,
\end{equation*}
endowed with the cyclic order~$\Cluster_1 \rightarrow \Cluster_2 \rightarrow \ldots \rightarrow \Cluster_m \rightarrow \Cluster_1$.
We are interested in cycle clusterings with large total net flow between consecutive clusters,
\begin{equation*}
  \sum_{k=1}^{m-1} \NetFlow(\Cluster_k,\Cluster_{k+1}) + \NetFlow(\Cluster_m,\Cluster_1),
\end{equation*}
and large total coherence of the individual clusters, \ie
\begin{equation*}
  \sum_{k=1}^{m} \Coherence({\Cluster_k}).
\end{equation*}

To use matrix notation, we can encode a clustering in an assignment matrix~$X \in \{0,1\}^{n \times m}$,
\begin{equation*}
  X_{ik}=\begin{cases} 1 & \text{for } i \in \Cluster_k, \\ 0 & \text{otherwise.} \end{cases}
\end{equation*}
Then our objective can be expressed in terms of the projected matrix of unconditional transition probabilities
\begin{equation*}
  \overline{W} := X^T W X \in [0,1]^{m \times m}.
\end{equation*}
Its diagonal entries~$\overline{w}_{kk}$ carry the coherences~$\Coherence(\Cluster_k)$.  The net flow values~$\NetFlow(\Cluster_k,\Cluster_\ell)$ equal~$\overline{w}_{k\ell} - \overline{w}_{\ell k}$, \ie they can be read from the off diagonal entries of
\begin{equation*}
  \Delta := \overline{W} - \overline{W}^T = X^T (W - W^T) X.
\end{equation*}
By construction, the diagonal entries of~$\Delta$ are zero.  Furthermore, from~$\pi^T P = \pi^T$ it follows that the row sum vector of~$W$ equals its column sum vector, \ie the row sums and column sums of~$\Delta$ are all zero.

As a first consequence, $\Delta$ is the zero matrix if $m \leq 2$, \ie a clustering into a 2-cycle cannot exhibit any non-reversibility.
In this sense, the smallest interesting case is a \emph{3-cycle clustering}.
As a second consequence, for any 3-cycle clustering, $\Delta$ must have the special structure
\begin{equation*}
  \Delta = \begin{pmatrix}
    0 & \varepsilon & -\varepsilon \\
    -\varepsilon & 0 & \varepsilon \\
    \varepsilon & -\varepsilon & 0 
  \end{pmatrix},\label{eq:casem3}
\end{equation*}
where we may assume~$\varepsilon \geq 0$ after reordering.
Hence, in a 3-cycle clustering, the net flow between each two clusters, $\varepsilon$, is identical.
Maximizing the total net flow~$3\varepsilon$ is equivalent to maximizing pairwise non-reversibility.
In our experiments later, we will also focus on this prototypical case of a cycle clustering with~$m=3$.

Finally, note that the two objectives, non-reversibility in terms of net flow and coherence, are not necessarily aligned:
In general, we cannot assume that there will be one clustering that maximizes both criteria at the same time.
Hence, we combine both and use a scaling parameter~$\alpha > 0$ to control the emphasis on coherence to obtain the weighted objective
\begin{equation}\label{equ:objective}
  \sum_{k=1}^{m-1} \NetFlow(\Cluster_k,\Cluster_{k+1}) + \NetFlow(\Cluster_m,\Cluster_1)
  + \alpha \sum_{k=1}^{m} \Coherence({\Cluster_k}).
\end{equation}
In our experiments, we will use a small value of~$\alpha = 0.001$ as a default.

The following gives an example where both criteria are important to detect a meaningful 3-cycle.



\begin{example}\label{ex:cycleclustering}
  Consider the process shown in Figure~\ref{fig:cycleclustering}, which has nine bins grouped into three ``natural'' clusters.
  The unconditional transition probabilities are attached to the edges, scaled by a factor of nine, the number of bins, for better readability.
  The example is constructed symmetrically such that there is a small cyclic flow between bins~$1$, $2$, and $3$.
  Each of these bins is connected to two more bins in a reversible fashion.

  To obtain the highest possible total net flow of~$0.3/9$, a 3-cycle clustering must assign the states~$1$, $2$, and $3$ to different clusters.
  The assignment of the remaining states has no influence on the net flow and could therefore be performed arbitrarily---unless we take into account coherence.
  Coherence is maximized by clustering~$4$ and~$5$ with~$1$, $6$ and~$7$ with~$2$, and $8$ and~$9$ with~$3$, thus detecting the natural cyclic structure of the process.
\qed
\end{example}

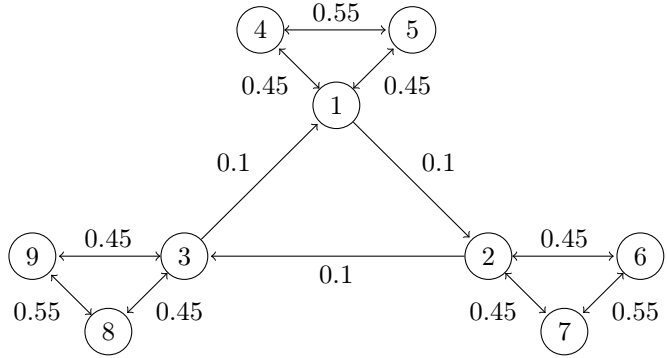
\begin{figure}
  \centering
  \begin{tikzpicture}[->,shorten >=1pt,auto,
onenode/.style = {shape=circle, rounded corners,
                     draw, align=center
                     },
twonode/.style = {shape=circle, rounded corners,
                    	draw, align=center
                    	},
threenode/.style = {shape=circle, rounded corners,
                    	draw, align=center}
                    	]
                    
      \foreach \place/\name in { {(2,2)/1}, {(1,3)/4}, {(3,3)/5}}
      \node[onenode] (\name) at \place {\name};
      \foreach \place/\name in { {(4,0)/2}, {(6,0)/6}, {(5,-1)/7}}
            \node[twonode] (\name) at \place {\name};
      \foreach \place/\name in {{(0,0)/3},{(-2,0)/9}, {(-1,-1)/8}}
                  \node[threenode] (\name) at \place {\name};
     \path[->]           (1)  edge 	     node {0.1} (2);
     \path[<->]          (1)  edge    node {0.45}  (4);
     \path[<->]          (1)  edge    node [below right]  {0.45} (5);
     \path[<->]          (4)  edge    node {0.55} (5);
     \path[->]           (2)  edge 	     node {0.1} (3);
     \path[<->]          (2)  edge    node {0.45}  (6);
     \path[<->]          (2)  edge    node [below left]  {0.45} (7);
     \path[<->]          (6)  edge    node {0.55} (7);
     \path[->]           (3)  edge 	     node {0.1} (1);
     \path[<->]          (3)  edge    node {0.45}  (8);
     \path[<->]          (3)  edge    node [above] {0.45}  (9);
     \path[<->]          (8)  edge    node {0.55} (9);
\end{tikzpicture}
  \caption{A cyclic process with non-dominant cycle. The edges weights are the unconditional transition probabilities $\pi_i p_{ij}$. 
  For better readability they have been scaled by a factor of nine.}
  \label{fig:cycleclustering}
\end{figure}

\begin{Rem}
  In earlier models, we had also experimented with the natural idea of maximizing some measure of reversibility within each cluster.
  However, this incurs difficulties when the backward and forward probability between two states is zero or very small.
  First, this introduces a connection between two states that might not be at all or are only very weakly related.
  Second, it is inherently instable to classify such edges as irreversible or reversible because of the small differences.
  Coherence, as used here, avoids all these disadvantages.
\end{Rem}

\subsection{Mixed-Integer Programming Formulation}
\label{subsec:MIP}
In order to actually compute optimal cycle clusterings w.r.t.~\eqref{equ:objective}, we use the following MIP model.
For each bin $i \in \Bins = \{1,\ldots,n\}$ and cluster $k \in \Clusters = \{1,\ldots,m\}$, we introduce a binary decision variables $x_{ik}$ with 
\begin{align*}
  x_{ik} = 1 \Longleftrightarrow i\in \Cluster_k \Longleftrightarrow \text{ bin } i \text{ is assigned to cluster } k,
\end{align*}
which correspond to the entries of the assignment matrix~$X$ used before.
Furthermore, with slight abuse of notation, we introduce continuous variables~$\NetFlow_k \in \R_{\geq 0}$ for the net flow from cluster $k$ to~$\phi(k)$, where 
\begin{align*}
  \phi\;:\;\Clusters\mapsto\Clusters,\;\phi(k) = \begin{cases} k+1 &\text{ if } k < m, \\ 1 &\text{ otherwise,}\end{cases}
\end{align*}
and continuous variables $\coh_k \in \R_{\geq 0}$ for the coherence of cluster $k \in \Clusters$.
Then computing an optimal cycle clustering for a fixed number of $m$~clusters can be expressed as the following MIP model:
 \begin{align}
  \lefteqn{\hspace*{-5.3em} \max\; \sum_{k \in \Clusters}\NetFlow_k + \alpha \cdot \sum_{k \in \Clusters} \coh_k} \\
  \text{s.t.} \hspace*{1.8ex} \sum_{k \in \Clusters} x_{ik} &= 1 && \fa i \in \Bins \label{eq:assignment} \\
  \sum_{i \in \Bins} x_{ik} &\ge 1 && \fa k \in \Clusters \label{eq:setcover} \\
  \NetFlow_k &= {\sum_{i,j \in \Bins} q_{ij} (x_{ik} x_{j\phi(k)} -	 x_{i\phi(k)} x_{jk})} && \fa k \in \Clusters \label{eq:irreversibility} \\
  \coh_k &=  \sum_{i,j \in \Bins} q_{ij} x_{ik} x_{jk} &&  \fa k \in \Clusters \label{eq:coherence} \\
  x_{ik} &\in \{0,1\} && \fa i \in \Bins, k \in \Clusters \\
  \NetFlow_k, \coh_k &\in \R_{\geq 0} && \fa k \in \Clusters
\end{align}

Constraints of type~\eqref{eq:assignment} ensure that each bin $i$ is assigned to exactly one cluster $k$. Constraints~\eqref{eq:setcover} assert that there are no empty clusters.
The net flow between two consecutive clusters~$k$ and $\phi(k)$ is described by constraints of type~\eqref{eq:irreversibility}.
The coherence within each cluster $k$ is modeled by constraints of type~\eqref{eq:coherence}.

The products of binary variables appearing in constraints~\eqref{eq:irreversibility} and~\eqref{eq:coherence} are nonlinear.
We have applied a standard reformulation technique~\cite{fortet1960algebre} to obtain a mixed-integer \emph{linear} programming formulation that can be solved by standard state-of-the-art MIP solvers.
This requires the introduction of additional auxiliary variables, but yields significantly lower solution times than using global mixed-integer nonlinear programming solvers.

\begin{Rem}
  We want to point out that the MIP model is even more general than our initial development of the cycle clustering approach.
  It only requires a non-negative matrix~$W$ as input and does not rely on the form~$W = \diag(\pi) P$ with~$P$ being a stochastic matrix and~$\pi$ its stationary distribution vector.
\end{Rem}

\subsection{Complexity of Cycle Clustering}
\label{subsec:complexity}


While it is known that mixed-integer programming is $\NP$-hard in general~\cite{schrijver2003combinatorial}, special subclasses still may be easier.
In this section we discuss the complexity of cycle clustering and show that it is $\NP$-hard by a reduction from the multiway cut problem~\cite{dahlhaus1992complexity}.

\begin{Def}[multiway cut]
	Let $G=(V,E)$ be a graph with non-negative edge weights $c(e) \ge 0$ and a set of specified vertices $S = \{s_1,\ldots,s_m\} \subseteq V$ called terminals. A \emph{multiway cut} is a subset of edges $E' \subseteq E$ that separates the terminals $s_1,\ldots, s_m$ in the sense that there exists no path from any terminal to any other terminal in $(V,E \setminus E')$. The multiway cut problem is that of finding a weight-minimal multiway cut.
\end{Def}

The multiway cut problem is $\NP$-hard for any fixed~$m \geq 3$~\cite{dahlhaus1992complexity}, which allows us to prove the following.

\begin{Thm}\label{thm:complexity}
  The cycle clustering problem as defined in Section~\ref{subsec:clusteringmodel} is $\NP$-hard for any~$m \in \N$, $m \geq 3$, and any~$\alpha > 0$.
\end{Thm}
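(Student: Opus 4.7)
The plan is a polynomial-time reduction from the multiway cut problem, which is $\NP$-hard for every fixed $m\geq 3$. Given an instance $(G=(V,E),c,s_1,\dots,s_m)$ of multiway cut, I build a cycle clustering instance on the bin set $\Bins:=V$ whose non-negative weight matrix $W$ consists of two layers. First, every edge $\{u,v\}\in E$ contributes a \emph{symmetric} entry $w_{uv}=w_{vu}=c(\{u,v\})/2$, so that the total coherence of any clustering equals $K-\operatorname{cut}(C_1,\dots,C_m)$ with $K:=\sum_{e\in E}c(e)$. Second, I add a large \emph{antisymmetric} terminal cycle by setting $w_{s_i,s_{\phi(i)}}:=L$ for each $i$, for some $L>0$ to be fixed. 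The construction has size $O(|V|^2)$ and is a legitimate cycle clustering input by the Remark in Section~\ref{subsec:MIP}.

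The analysis hinges on a case distinction over how the terminals land in the clusters. If they sit in distinct clusters in the same cyclic order as $\phi$ (call this case A, unique up to rotation), the terminal cycle contributes exactly $mL$ to the net-flow part of the objective and nothing to coherence, giving total objective $mL+\alpha(K-\operatorname{cut})$ over an arbitrary partition of the non-terminals. If they are in distinct clusters but in \emph{reverse} cyclic order, every $\NetFlow_k$ equals $-L$, violating the MIP feasibility constraint $\NetFlow_k\geq 0$ and so being ruled out for free. If two or more terminals share a cluster, every such shared terminal edge is rerouted from net flow to coherence, contributing only $\alpha L$ instead of $L$ to the objective. For $\alpha\in(0,1)$, choosing $L>\alpha K/(m(1-\alpha))$ makes case A strictly optimal, because the net-flow gain $mL$ strictly exceeds the best possible objective of any merged configuration, $\alpha(K+mL)$. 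Relabeling clusters so that $s_i\in C_i$ then reduces the remaining optimization to maximizing coherence over partitions of $V\setminus\{s_1,\dots,s_m\}$, i.e.\ to minimum multiway cut on $G$.

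I expect the main obstacle to be handling the regime $\alpha\geq 1$, in which the bound above fails: a fully merged terminal configuration collects $\alpha mL$ in coherence and nominally beats case A's net-flow bonus $mL$, irrespective of how large $L$ is chosen. A principled remedy is to exploit the feasibility constraint $\NetFlow_k\geq 0$ more aggressively, for example by attaching to each terminal $s_i$ a heavy directed chain of auxiliary bins whose orientation forces some $\NetFlow_k$ to become negative whenever two terminals coincide, turning partial mergings from merely suboptimal to \emph{infeasible}. Verifying that such a gadget rules out every non-case-A configuration uniformly in $\alpha$ while leaving case A feasible is the delicate technical step; once this is done, the proof reverts to the clean coherence-versus-cut correspondence described above, yielding $\NP$-hardness for every $m\geq 3$ and every $\alpha>0$.
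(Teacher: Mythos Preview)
Your reduction from multiway cut---a symmetric edge layer so that coherence equals total weight minus cut, plus a heavy directed cycle through the terminals so that net flow forces the terminals apart---is precisely the paper's construction. The paper additionally normalizes the weight matrix to obtain a genuine stochastic matrix with stationary distribution, but by the Remark you invoke this is cosmetic. Your case analysis is more explicit than the paper's, which simply asserts that with $M>\alpha\sum_e c(e)$ every terminal must sit in its own cluster.

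Two corrections. First, the problem ``as defined in Section~\ref{subsec:clusteringmodel}'' is the unconstrained maximization of~\eqref{equ:objective} over ordered partitions; the sign constraint $\NetFlow_k\geq 0$ belongs only to the MIP formulation of Section~\ref{subsec:MIP}. You therefore cannot discard the reverse cyclic order ``for free'' by infeasibility---but you do not need to, since it yields total net flow $-mL$ and is strictly dominated by case~A anyway. More seriously, your proposed fix for $\alpha\geq 1$---a gadget that renders merged terminal configurations \emph{infeasible} through a negative $\NetFlow_k$---cannot work for the Section~\ref{subsec:clusteringmodel} problem, where no such infeasibility exists. Second, for $m\geq 4$ the case ``distinct clusters but not case~A'' contains permutations other than the reversal; these are easily dispatched (any such placement forfeits at least one terminal arc's contribution to net flow, and with $L$ large this loss exceeds any coherence gain), but your write-up should say so.

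Your diagnosis of the $\alpha\geq 1$ obstacle is nonetheless correct: putting all terminals in one cluster contributes $mL$ to coherence, hence $\alpha mL$ to the objective, which for $\alpha\geq 1$ matches or beats the net-flow bonus $mL$ of case~A no matter how large $L$ is. The paper's proof is terse at exactly this point and does not visibly work through the merged-terminal comparison either, so the difficulty you flag appears to be shared. In that sense your proposal is at least as careful as the paper's own argument; what remains is to find a mechanism for large $\alpha$ that does not rely on the nonnegativity constraint.
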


\begin{proof}
  Suppose we are given a multiway cut instance over an undirected graph $G=(V,E)$ with the set of nodes~$V = \{1,\ldots,n\}$, the set of edges~$E \subseteq \binom{V}{2}$, and edge weights~$c(e) \ge 0$.
  Let $S = \{s_1, \ldots, s_m\}$ be the set of terminals for~$m \geq 3$.
  Because edges between terminals contribute a constant offset to the objective value of any multiway cut, we may assume w.l.o.g.\ that~$E \cap  \binom{S}{2} = \emptyset$.
  Because disconnected non-terminal nodes can be assigned arbitrarily, we may assume w.l.o.g.\ that ~$\sum_{v\in V\colon \{uv\}\in E} c(\{uv\}) > 0$ for all~$u \in V \setminus S$.

  In order to show that this instance can be solved as a cycle clustering problem, we construct a directed graph~$D = (V, A)$ with arc set
  \begin{align*}
    A &= \{ (uv) \in V\times V\mid \{uv\} \in E,\ u < v \} \tag{forward arcs} \\
     &\hspace{1.2pt}\cup\hspace{1.2pt} \{ (uv) \in V\times V\mid \{uv\} \in E,\ u > v \} \tag{backward arcs} \\
     &\hspace{1.2pt}\cup\hspace{1.2pt} \{ (s_{i}s_{\phi(i)}) \in S\times S\mid \{s_{i}s_{\phi(i)}\} \notin E \}. \tag{auxiliary arcs}
  \end{align*}
  The first two parts split edges into forward and backward arcs and the third set, which may be empty, ensures that the cycle~$s_1 \rightarrow s_2 \rightarrow \ldots \rightarrow s_m \rightarrow s_1$ is present in~$D$.

  To define arc weights, we partition the edges of~$G$ into
  \begin{align*}
    \check{E} = \{ \{uv\} \in E \mid \vert \{u,v\} \cap S \vert \leq 1\} \quad \text{and} \quad
    \hat{E} = \{ \{uv\} \in E \mid u,v \in S \}
  \end{align*}
  and set
  \begin{align*}
    d(a) = 
    \begin{cases}
    M      & \fa a = (s_{i}s_{\phi(i)}) \in A, \\
    0      & \fa a = (s_{i}s_j) \in A \text{ with } j \neq \phi(i), \\
    c(e)/2 & \fa a = (uv) \in A, \{uv\} \in \check{E}.
    \end{cases}
  \end{align*}
  Here $M$ is chosen sufficiently large such that each terminal will later be forced into a different cluster, \ie
  \begin{align*}
    M > \sum_{e \in E} \alpha c(e) \geq \sum_{e \in \check{E}} \alpha c(e). 
  \end{align*}
  From this construction we derive a weighted adjacency matrix $Q \in\R^{n \times n}$ via
  \begin{align*}
    q_{uv} = 
    \begin{cases}
      d(a) & \text{if } a = (uv) \in A, \\
      0    & \text{otherwise}.
    \end{cases}
  \end{align*}
  Normalization of each row by its row sum~$\Norm{Q_{u\cdot}}_{1} = \sum_{u^\prime \in V} q_{uu^\prime}$, which by assumption is non-zero, gives a stochastic matrix $P\in [0,1]^{n \times n}$ with entries
  \begin{align*}
    p_{uv} = \frac{q_{uv}}{\Norm{Q_{u\cdot}}_{1}}.
  \end{align*}
  This transition matrix~$P$ has a unique stationary distribution~$\pi$ given by
  \begin{align*}
    \pi_{u} = \frac{\Norm{Q_{u\cdot}}_{1}}{\sum_{u' \in V} \Norm{Q_{u'\cdot}}_{1}}
  \end{align*}
  for~$u = 1,\ldots,n$.
  The corresponding matrix~$W$ of unconditional transition probabilities has entries
  \begin{align*}
    w_{uv}
    = \pi_u p_{uv}
    = \frac{\Norm{Q_{u\cdot}}_{1}}{\sum_{u' \in V} \Norm{Q_{u'\cdot}}_{1}} \cdot \frac{q_{uv}}{\Norm{Q_{u\cdot}}_{1}}
    = \frac{q_{uv}}{\sum_{u^\prime,v^\prime \in V} q_{u^ \prime v^\prime}}
  \end{align*}
  for~$u,v = 1,\ldots,n$, and thus
  \begin{align*}
    w_{uv} - w_{vu}
    = \begin{cases}
      M & \text{for } u = s_i, v = s_{\phi(i)},\\
      -M & \text{for } v = s_{\phi(i)}, u = s_i,\\
      0 & \text{otherwise.}
    \end{cases}
  \end{align*}
  Hence, only edges between consecutive terminals violate the detailed balance condition and contribute to the net flow.
  
  Now let~$\Cluster_1,\ldots,\Cluster_m$ be an optimal solution of the cycle clustering problem w.r.t.\ the constructed matrix~$P$ and the stationary distribution~$\pi$.
  W.l.o.g., assume~$s_1 \in \Cluster_1$.
  Due to the choice of~$M$, in any optimal clustering, terminal~$s_k$ must be in cluster~$\Cluster_k$.

  Finally, since the assignment of non-terminal nodes does not affect the net flow, they must be assigned such as to maximize coherence.
  The following calculation shows that maximizing this remaining part of the objective function is equivalent to minimizing the weight of the edges in the corresponding multiway cut,
  \begin{align*}
    \alpha \sum_{k=1}^{m} \Coherence({\Cluster_k})
    &= \alpha \sum_{k=1}^{m} \sum_{u,v \in \Cluster_k} w_{uv}
    = \alpha \sum_{k=1}^{m} \sum_{u,v \in \Cluster_k} \frac{q_{uv}}{\sum_{u^\prime,v^\prime \in V} q_{u^\prime v^\prime}}\\
    &= \frac{\alpha}{\sum_{u^\prime,v^\prime \in V} q_{u^\prime v^\prime}} \sum_{k=1}^{m} \sum_{u,v \in \Cluster_k} q_{uv}
    = \frac{\alpha}{\sum_{u^\prime,v^\prime \in V} q_{u^\prime v^\prime}} \sum_{k=1}^{m} \sum_{e \in E \cap \binom{\Cluster_k}{2}} \hspace*{-1.5ex}c(e)\\
    &= \underbrace{\frac{\alpha}{\sum_{u^\prime,v^\prime \in V} q_{u^\prime v^\prime}}}_{\text{constant } >0} \Big( \underbrace{\sum_{e \in E} c(e)}_{\text{constant}} - \hspace*{-1.5ex}\underbrace{\sum_{e \in E \setminus \bigcup_{k = 1}^{m}\binom{\Cluster_k}{2}} c(e)}_{\text{multiway cut weight}} \hspace*{-.5ex}\Big)
  \end{align*}

  To summarize, we gave a polynomial reduction of the multiway cut problem to cycle clustering, proving that cycle clustering is $\NP$-hard.
\end{proof}

\section{Computational Experiments}
\label{sec:results}

To evaluate our new clustering approach we used both synthetic instances and a well-known system of differential equations that model the interaction of genes~\cite{ElowitzLeibler2000}.
In the first part of this section we describe the set of instances we have used for the computational experiments in more detail.
In the second part we discuss the solving environment and the software we have used. Finally, we present our computational results.

\subsection{Testset}
\label{subsec:testset}

\paragraph{Catalytic Cycle.}

To create data sets that resemble molecular dynamical simulations, a hybrid Monte-Carlo method (HMC) was applied to a synthetic, two-dimensional energy landscape~$\Omega$ as described in~\cite{Brooks2011,FackeldeyWeber2014}. 
In this variant of an HMC, the system is propagated with a drift to one of the minima defined by the potential~$\Omega$ and an additional random value, followed by a Metropolis-like acceptance step that assures the convergence of the distribution defined by the function~$\Omega$.
If the system enters a predefined set the drift is updated for the next state of the cycle.
This creates the dynamics of a metastable system with rare asymmetric jumps, that one would expect from a catalytic cycle.

\begin{algorithm}[t]
\caption{HMC with drift}
\label{alg:HMC}
\begin{algorithmic}[1]
\STATE \algorithmicrequire { start vector $x_{0}$, inverse temperature $\beta$, $N$, drift $d$, random vectors~$r_{1},\ldots,r_{N}$}, uniformly distributed numbers~$u_{1},\ldots,u_{N}\in [0,1]$
\STATE \algorithmicensure { trajectory $x_{0},\ldots,x_{N-1}$  }
\FOR{$i=1$ \TO $N-1$}
\STATE $x_{new} \gets x_{i-1}+r_{i}+d$
\IF {$\exp(-\beta (\Omega(x_{new})-\Omega(x_{i-1})))<u_{i}$}
\STATE $x_{i} \gets x_{new}$
\STATE $d\,\, \gets \text{update}(d)$
\ELSE \STATE $x_{i} \gets x_{i-1}$
\ENDIF
\ENDFOR
\end{algorithmic}
\end{algorithm}
Algorithm~\ref{alg:HMC} yields a sampling of length $N$, which is used to compute the transition matrix. 
We extracted~$n$ vectors~$c_{1},\ldots,c_{n}$ from the sampling such that the fill distance
\[
h:=\max_{j=1,\ldots,N} \min_{i=1,\ldots,n} \|x_{j}-c_{i}\|_{2}
\]
was minimized. The vectors $c_{i}$ are the centers of regions (bins) and following~\cite{Weber2011}, radial basis functions
\begin{align*}
\Phi_{i}(x):=\frac{\exp(-\|x-c_{i}\|_{2}^{2})}{\sum_{k=1}^{n}\exp(- \|x-c_{k}\|_{2}^{2})}
\end{align*}
with values in $ (0,1) $ were used as membership function, \ie instead of assigning every $x$ to one bin, it is assigned to bin $i$ with the fraction $\Phi_{i}(x)$.
The transition matrix $P$ was then defined as
\[
p_{ij}:= \frac{ \sum_{k=0}^{N} \Phi_{i}(x_{k}) \Phi_{j}(\tilde{x}_{k})} {\sum_{k=0}^{N} \Phi_{i}(x_{k})}
\]
where
the notation $\tilde{x}_{k}$ refers to a propagation of the system by a time step~$t$, \ie $\tilde{x}_{k}$ is the state of $x_{k}$ after $t$ steps. In the examples we set $t=1$.

The method was applied to examples with three, four, and six minima to show the influence of the applied drift on the clustering and the value of maximal flow.
The potential functions have their minima arranged on a circle around a single maximum and are of the form

\begin{align*}
\Omega_{3}(x,y) =\; &6\exp\left[-3(x^{2}+y^2)\right]  \\ 
-&8\exp\left[-(x-x^{\star})^2-(y+y^{\star})^2\right]-8\exp\left[-(x+x^{\star})^2-(y+y^{\star})^2\right]\\
-&8\exp\left[-x^2-(y-1)^2\right]\\
\Omega_{4}(x,y) =\; &4\exp\left[-3(x^{2}+y^2)\right]  \\ 
-&8\exp\left[-x^2-(y-1.5)^2\right]-8\exp\left[-(x-1)^2-y^2\right] \\
-&8\exp\left[-(x+1)^2-y^2\right]-8\exp\left[-x^2-(y+1.5)^2\right]\\
\Omega_{6}(x,y) =\; &4\exp\left[-3(x^{2}+y^2)\right]  \\ 
-&8\exp\left[-(x-2x^{\star})^2-(y+2y^{\star})^2\right]-8\exp\left[-(x+2x^{\star})^2-(y+2y^{\star})^2\right]\\
-&8\exp\left[-(x+2x^{\star})^2-(y-2y^{\star})^2\right]-8\exp\left[-(x-2x^{\star})^2-(y-2y^{\star})^2\right]\\
-&8\exp\left[-(x+2)^2-(y-1)^2\right]-8\exp\left[-(x-2)^2-y^2\right]    
\end{align*}
with $x^{\star}=0.5$ and $y^{\star}=0.5\sqrt{3}$.
The minima resemble the metastable macroscopic states of the system.
Trajectories with $N=10000$ steps were simulated to construct matrices with $n=20$ bins.

\paragraph{Repressilator.}

The repressilator is a system of differential equations that describes a synthetic genetic regulatory network~\cite{ElowitzLeibler2000} consisting of three genes, TetR, $\gamma$cI, and LacI. Each of the genes produces a protein $p$ that represses the production of mRNA $m$ of another gene. The symmetric system was described in \cite{ElowitzLeibler2000} by the equations
\begin{align*}
\dfrac{dm_{A}}{dt}=-m_{A}+\dfrac{v}{1+p_{C}^{h}}+v_{0} && \dfrac{dp_{A}}{dt}=-\beta(p_{A}-m_{A}) \\
\dfrac{dm_{B}}{dt}=-m_{B}+\dfrac{v}{1+p_{A}^{h}}+v_{0} && \dfrac{dp_{B}}{dt}=-\beta(p_{B}-m_{B}) \\
\dfrac{dm_{C}}{dt}=-m_{C}+\dfrac{v}{1+p_{B}^{h}}+v_{0} && \dfrac{dp_{C}}{dt}=-\beta(p_{C}-m_{C})
\end{align*} 
with $v$ = 298.2 transcriptions per second, $\beta$ = 1/5 the ratio of protein decay
rate to mRNA decay rate, a growth constant~$v_{0}$ = 0.03, and a Hill coefficient~$h = 2$.

Trajectories were started at 200 points in the six-dimensional cube of the interval $[0,20]$ and simulated for 1.5 seconds using the ode45 function by Matlab.
The starting points were generated by a Niederreiter sequence (see \cite{LN_84,N_88}) with values from [0,1] and scaled afterwards.
The transition matrix was defined similar as above by
\[
p_{ij}=\dfrac{\exp(-0.2\|x^{start}_{i}-x^{end}_{j}\|)}{\sum_{k=1}^{200}\exp(-0.2\|x^{start}_{i}-x^{end}_{k}\|)}, \quad i,j=1,\ldots,200.
\]

\subsection{Testing Environment}
\label{subsec:environment}

For computing optimal cycle clusterings we used the mixed-integer programming solver SCIP, which is free for academic purposes~\cite{GamrathFischerGallyetal.2016}.

Although in theory SCIP could solve general MIP formulations out-of-the-box, our cycle clustering MIPs proved hard in practice.
Hence, to speed up the solving process of SCIP, we implemented three problem-specific heuristics that try to find good primal solutions.  They are called at the root node and during the branch-and-bound search:
\begin{itemize}
\item First, we implemented a greedy heuristic to construct a feasible clustering by iteratively assigning the bins to clusters.
The heuristic starts with the assignment of bin $1$ to cluster $1$ and assigns all remaining bins iteratively.
Therefore, the best possible assignment w.r.t.\ non-reversibility and coherence is computed in each assignment step.
\item Second, inspired by the approach with Schur vectors~\cite{WeberFackeldey2015}, we use the solution of the LP relaxation at each
node within the \bandb procedure as a starting point for a rounding heuristic that knows the specific problem structure.
\item Third, we implemented an improvement heuristic similar to~\cite{KernighanLin1970} that iteratively tries to identify a bin that can be moved to a different cluster such that the objective function is increased.
\end{itemize}
Note that none of these procedures is guaranteed to find an optimal clustering or even to succeed at all at finding a feasible solution.
However, applied regularly as part of the global solution process of SCIP they help to accelerate the convergence of the primal and dual bound significantly.

All tests were run sequentially on identical machines with an Intel Xeon Quad-core with 3.2\,GHz and 48\,GB of RAM.
To balance net flow and coherence in the objective function, we used a value of~$\alpha = 0.001$.

\subsection{Computational Results}

\begin{table}[t]
\centering
\setlength{\tabcolsep}{4pt}
\begin{tabular}{lcr*{4}{r}}
\toprule
instances  	& drift			& best obj. 			& obj.\ bound                        & net flow 				& coherence & time [s] \\
\midrule                                                                                         
$\Omega_{3}$			& 0.1	 		& 0.0018 	& 0.0018 & 0.0013 & $0.5374$     & 0.1 	\\
$\Omega_{3}$    		& 0.2 			& 0.0057 	& 0.0057 & 0.0052 & $0.5360$  & 0.1	\\
$\Omega_{4}$			& 0.1 			& 0.0037 	& 0.0037 & 0.0033 & $0.3704$  & 1.4 	\\
$\Omega_{6}$			& 0.1 			& 0.0056 	& 0.0056 & 0.0051 & $0.4595$  & 6.9 \\
\midrule[.3pt]                                                                                                
Rep (MIP)		& \multicolumn{1}{c}{--}& 0.1395 				& 0.2345                            & 0.1391 				& 0.3636    & 3600	\\
Rep (G-PCCA)		   	& \multicolumn{1}{c}{--}& 0.0163 				& \multicolumn{1}{c}{--}            & 0.0159 				& 0.3754    & $<$ 1	\\
\bottomrule       
\end{tabular} 
\caption{Summary of results for all test instances. The repressilator MIP could not be solved to optimality within the time limit of 3600 seconds. Note that the G-PCCA result in the last line are added as a reference and computed by a different clustering algorithm without guaranteed bound on the objective, see~\cite{WeberFackeldey2015}.}
\label{tab:output}
\end{table}

We analyzed results for the two different experimental settings of Section~\ref{subsec:testset}.
In the \emph{Catalytic Cycle} example, there exist strong metastabilities with a weak non-reversible net flow between them.
Our results show that our method is able to identify non-dominant cycles between metastable clusters.
The results for the second example \emph{Repressilator} show that we are also able to find dominating cyclic structures.
Moreover, our novel MIP formulation provides biologically meaningful clusterings and outperforms state-of-the-art approaches in that sense.

\paragraph{Catalytic Cycle.}
In the first four rows of Table~\ref{tab:output} and in Figure~\ref{fig:PotentialSolutions} one can see the results of the HMC simulations of potential energy surfaces different number of metastabilities ($3$, $4$, and $6$) and a small or larger drift ($0.1$ and $0.2$).
Column ``best obj.'' states the objective function value of the best cycle clustering that was computed and is composed of column ``net flow'' ($\sum_k f(C_k, C_{\phi(k)})$) plus~$0.001$ times column ``coherence'' ($\sum_k g(C_k)$).
Because SCIP could compute proven optimal cycle clusterings for these four instances, the  ``best obj.'' values equal the ``obj.~bound'' values that state the proven upper bound.

\begin{figure}[t]
  \centering
  \ifthenelse{\usecolors = 1}{
    \includegraphics[scale=0.28,trim={2cm 1.3cm 0 0},clip]{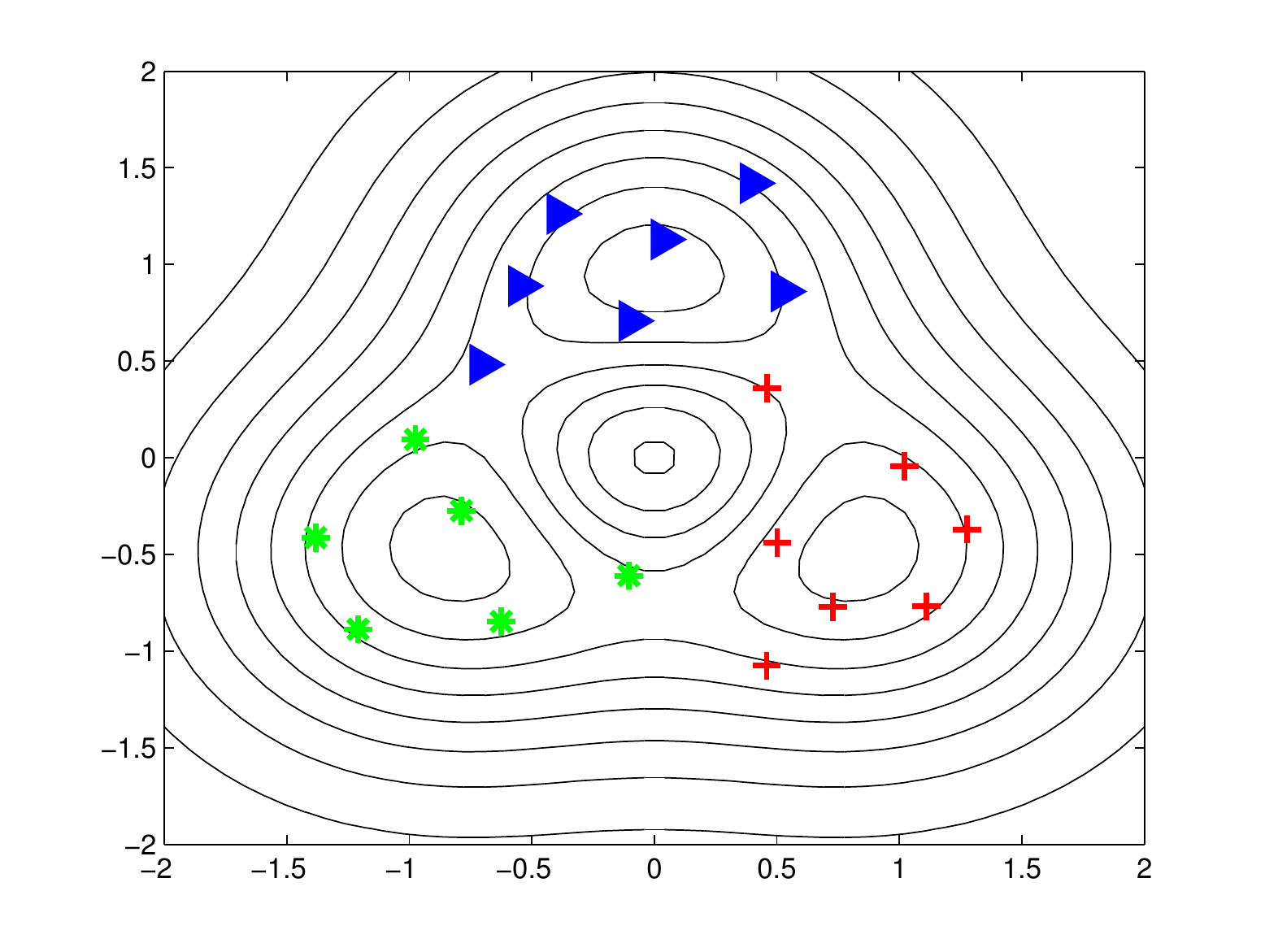}
    \includegraphics[scale=0.28,trim={2cm 1.3cm 0 0},clip]{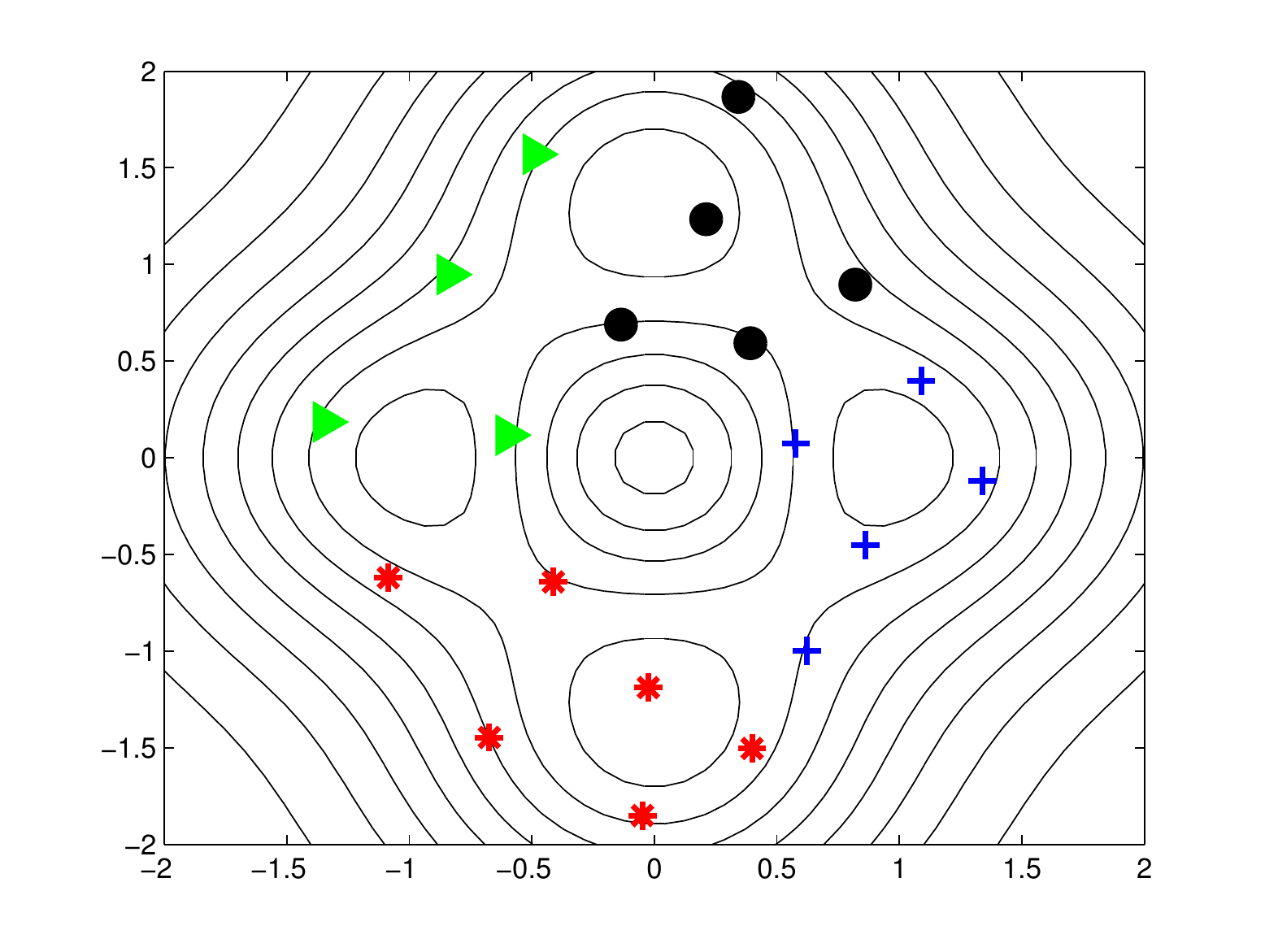}
    \includegraphics[scale=0.28,trim={2cm 1.3cm 0 0},clip]{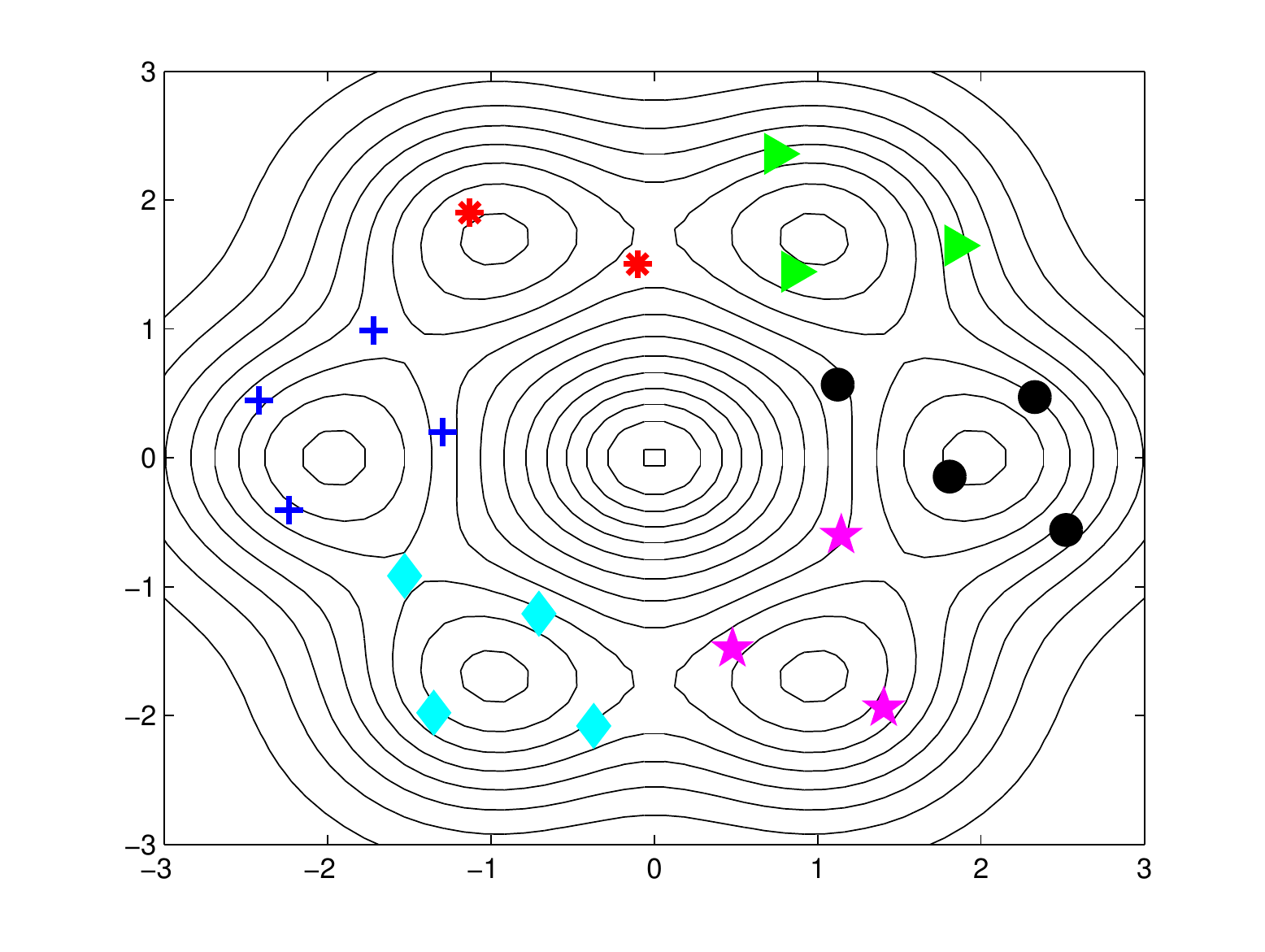}
  }{
      \includegraphics[scale=0.28,trim={2cm 1.3cm 0 0},clip]{Pot3-3-sw.eps}
      \includegraphics[scale=0.28,trim={2cm 1.3cm 0 0},clip]{Pot4-4-sw.eps}
     \includegraphics[scale=0.28,trim={2cm 1.3cm 0 0},clip]{Pot6-6-sw.eps}
  }
  \caption{Visualization of the solutions obtained with a 3-cycle, 4-cycle, and 6-cycle clustering for the synthetic potentials $\Omega_{3}$, $\Omega_{4}$, and $\Omega_{6}$, respectively.}
\label{fig:PotentialSolutions}
\end{figure}

In all four cases our approach was able to correctly identify the metastabilities and the direction of the drift.

Furthermore, we can observe that the coherence within the clusters is large although $\alpha$ was chosen small.
Note that it equals~$1$ minus the probability of seeing transitions between clusters.
By construction, the net flow is comparatively small, but could still be detected consistently.
The next example shows that our method is also able to analyze systems with a fast and productive cycle.

\paragraph{Repressilator.}
For the repressilator model, the last two rows of Table~\ref{tab:output} compare the cycle clustering solution (MIP) with a solution of the spectral clustering algorithm (G-PCCA) explained in Section~\ref{sec:background} and \cite{DjurdjevacConradWeberSchuette2016,WeberFackeldey2015}.
Note that here the ``drift'' column does not apply and G-PCCA was not designed with this objective function in mind and by its nature does not compute any proven objective bound.
In this sense we want to emphasize that this comparison does not provide any kind of benchmark.

\begin{figure}[t]
\begin{subfigure}{\textwidth}
	\centering
	\ifthenelse{\usecolors = 1}{
	  \scalebox{0.8}{
	  
	  \begin{tikzpicture}  
  \begin{scope}
      \begin{axis}[
          symbolic x coords={$p_{A}$,$m_{A}$, $p_{B}$,$m_{B}$, $p_{C}$,  $m_{C}$},
          xtick=data,
          ymin=0,
          ymax=15,
          width=5cm,height=5cm
        ]
        \addplot[ybar,fill=brown] coordinates {
          ($p_{A}$, 10.1941)
          ($m_{A}$, 0)
          ($p_{B}$, 0)
          ($m_{B}$, 0)
          ($p_{C}$, 0)
          ($m_{C}$, 0)
        };
        \addplot[ybar,fill=brown!40] coordinates {
          ($m_{A}$, 11.5285)
        };
        \addplot[ybar,fill=darkgreen] coordinates {
          ($p_{B}$, 10.0204)
        };
        \addplot[ybar,fill=darkgreen!40] coordinates {
          ($m_{B}$, 10.7842)
        };
        \addplot[ybar,fill=cyan] coordinates {
          ($p_{C}$, 9.6986)
        };
        \addplot[ybar,fill=cyan!40] coordinates {
          ($m_{C}$, 7.8668)
        };
      \end{axis}
  \end{scope}
  
   \begin{scope}[xshift=4cm]
    \begin{axis}[
            symbolic x coords={$p_{A}$,$m_{A}$, $p_{B}$,$m_{B}$, $p_{C}$,  $m_{C}$},
            xtick=data,
            ymin=0,
            ymax=15,
            width=5cm,height=5cm,
            yticklabels={,,}
          ]
          \addplot[ybar,fill=brown] coordinates {
            ($p_{A}$, 9.6431)
            ($m_{A}$, 0)
            ($p_{B}$, 0)
            ($m_{B}$, 0)
            ($p_{C}$, 0)
            ($m_{C}$, 0)
          };
          \addplot[ybar,fill=brown!40] coordinates {
            ($m_{A}$, 7.7293)
          };
          \addplot[ybar,fill=darkgreen] coordinates {
            ($p_{B}$, 10.4012)
          };
          \addplot[ybar,fill=darkgreen!40] coordinates {
            ($m_{B}$, 11.3031)
          };
          \addplot[ybar,fill=cyan] coordinates {
            ($p_{C}$, 9.9418)
          };
          \addplot[ybar,fill=cyan!40] coordinates {
            ($m_{C}$, 10.9544)
          };
        \end{axis}
  \end{scope}
  
  \begin{scope}[xshift=8cm]
   \begin{axis}[
           symbolic x coords={$p_{A}$,$m_{A}$, $p_{B}$,$m_{B}$, $p_{C}$,  $m_{C}$},
           xtick=data,
           ymin=0,
           ymax=15,
           width=5cm,height=5cm,
           yticklabels={,,}
         ]
        \addplot[ybar,fill=brown] coordinates {
          ($p_{A}$, 10.1248)
          ($m_{A}$, 0)
          ($p_{B}$, 0)
          ($m_{B}$, 0)
          ($p_{C}$, 0)
          ($m_{C}$, 0)
        };
        \addplot[ybar,fill=brown!40] coordinates {
          ($m_{A}$,   11.4076)
        };
        \addplot[ybar,fill=darkgreen] coordinates {
          ($p_{B}$, 9.6136)
        };
        \addplot[ybar,fill=darkgreen!40] coordinates {
          ($m_{B}$,  7.6979)
        };
        \addplot[ybar,fill=cyan] coordinates {
          ($p_{C}$, 10.3663)
        };
        \addplot[ybar,fill=cyan!40] coordinates {
          ($m_{C}$,   10.7210)
        };
   \end{axis}
  \end{scope}

   \begin{scope}[xshift=4cm, yshift=3.5cm]
   \draw[-,,line width=1pt] (-2,0) -- (-2,0.5);
   \draw[-,line width=1pt] (-2,0.5) --node[pos=.5,fill=white,circle] {$\NetFlow_1$} (1, 0.5);
   \draw[->,line width=1pt] (1,0.5) -- (1, 0);
   \end{scope}
   
   \begin{scope}[xshift=8cm, yshift=3.5cm]
   \draw[-,,line width=1pt] (-2,0) -- (-2,0.5);
   \draw[-,line width=1pt] (-2,0.5) --node[pos=.5,fill=white,circle] {$\NetFlow_2$} (1, 0.5);
   \draw[->,line width=1pt] (1,0.5) -- (1, 0);
   \end{scope}
   
   \begin{scope}[xshift=4cm]
   \draw[->,line width=1pt] (-2,-1) -- (-2, -0.5);
   \draw[-,line width=1pt] (-2,-1) --node[pos=.5,fill=white,circle] {$\NetFlow_3$} (5, -1);
   \draw[-,line width=1pt] (5,-1) -- (5, -0.5);
   \end{scope}
  
  \end{tikzpicture}

	  }
	}{
	  \scalebox{0.8}{\input{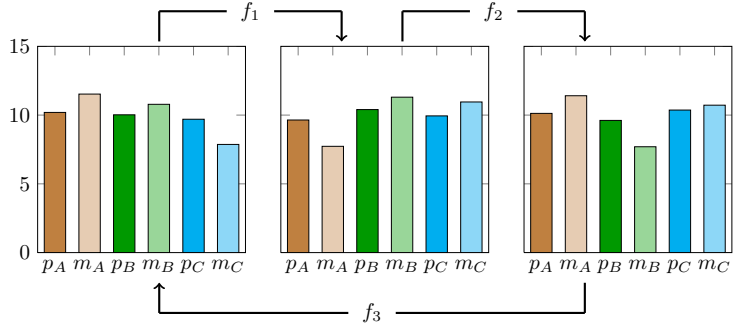}}
	}
	\caption{Solution obtained by G-PCCA.}
	\label{fig:BarRepressilatorGP}
\end{subfigure}
\bigskip

\begin{subfigure}{\textwidth}
	\centering
	\ifthenelse{\usecolors = 1}{
	  \scalebox{0.8}{
	  
	  \begin{tikzpicture}
  
  \begin{scope}[xshift=4cm]
      \begin{axis}[
          symbolic x coords={$p_{A}$,$m_{A}$, $p_{B}$,$m_{B}$, $p_{C}$,  $m_{C}$},
          xtick=data,
          ymin=0,
          ymax=15,
          width=5cm,height=5cm
        ]
        \addplot[ybar,fill=brown] coordinates {
          ($p_{A}$, 10.1412)
          ($m_{A}$, 0)
          ($p_{B}$, 0)
          ($m_{B}$, 0)
          ($p_{C}$, 8.2381)
          ($m_{C}$, 0)
         };
        \addplot[ybar,fill=brown!40] coordinates {
          ($m_{A}$, 8.2479)
        };
        \addplot[ybar,fill=darkgreen] coordinates {
          ($p_{B}$, 13.0237)
         };
        \addplot[ybar,fill=darkgreen!40] coordinates {
          ($m_{B}$, 13.8828)
        };
        \addplot[ybar,fill=cyan] coordinates {
          ($p_{C}$, 8.2381)
         };
        \addplot[ybar,fill=cyan!40] coordinates {
          ($m_{C}$, 9.1041)
        };
      \end{axis}
  \end{scope}
  
   \begin{scope}[xshift=8cm]
    \begin{axis}[
            symbolic x coords={$p_{A}$,$m_{A}$, $p_{B}$,$m_{B}$, $p_{C}$,  $m_{C}$},
            xtick=data,
            ymin=0,
            ymax=15,
            width=5cm,height=5cm,
            yticklabels={,,}
          ]
          \addplot[ybar,fill=brown] coordinates {
            ($p_{A}$, 7.2559)
            ($m_{A}$, 0)
            ($p_{B}$, 0)
            ($m_{B}$, 0)
            ($p_{C}$, 0)
            ($m_{C}$, 0)
          };
          \addplot[ybar,fill=brown!40] coordinates {
            ($m_{A}$, 9.0430)
          };
          \addplot[ybar,fill=darkgreen] coordinates {
            ($p_{B}$, 9.1273)
          };
          \addplot[ybar,fill=darkgreen!40] coordinates {
            ($m_{B}$, 8.6249)
          };
          \addplot[ybar,fill=cyan] coordinates {
            ($p_{C}$, 12.4660)
          };
          \addplot[ybar,fill=cyan!40] coordinates {
            ($m_{C}$, 13.3949)
          };
        \end{axis}
  \end{scope}
  
  \begin{scope}[xshift=0cm]
   \begin{axis}[
           symbolic x coords={$p_{A}$,$m_{A}$, $p_{B}$,$m_{B}$, $p_{C}$,  $m_{C}$},
           xtick=data,
           ymin=0,
           ymax=15,
           width=5cm,height=5cm,
           yticklabels={,,}
         ]
          \addplot[ybar,fill=brown] coordinates {
            ($p_{A}$, 12.3816)
            ($m_{A}$, 0)
            ($p_{B}$, 0)
            ($m_{B}$, 0)
            ($p_{C}$, 0)
            ($m_{C}$, 0)
          };
          \addplot[ybar,fill=brown!40] coordinates {
            ($m_{A}$, 12.6960)
          };
          \addplot[ybar,fill=darkgreen] coordinates {
            ($p_{B}$, 7.5718)
          };
          \addplot[ybar,fill=darkgreen!40] coordinates {
            ($m_{B}$, 7.2500)
          };
          \addplot[ybar,fill=cyan] coordinates {
            ($p_{C}$, 9.6104)
          };
          \addplot[ybar,fill=cyan!40] coordinates {
            ($m_{C}$, 7.6717)
          };
   \end{axis}
  \end{scope}

   \begin{scope}[xshift=4cm, yshift=3.5cm]
   \draw[-,,line width=1pt] (-2,0) -- (-2,0.5);
   \draw[-,line width=1pt] (-2,0.5) --node[pos=.5,fill=white,circle] {$\NetFlow_1$} (1, 0.5);
   \draw[->,line width=1pt] (1,0.5) -- (1, 0);
   \end{scope}
   
   \begin{scope}[xshift=8cm, yshift=3.5cm]
   \draw[-,,line width=1pt] (-2,0) -- (-2,0.5);
   \draw[-,line width=1pt] (-2,0.5) --node[pos=.5,fill=white,circle] {$\NetFlow_2$} (1, 0.5);
   \draw[->,line width=1pt] (1,0.5) -- (1, 0);
   \end{scope}
   
   \begin{scope}[xshift=4cm]
   \draw[->,line width=1pt] (-2,-1) -- (-2, -0.5);
   \draw[-,line width=1pt] (-2,-1) --node[pos=.5,fill=white,circle] {$\NetFlow_3$} (5, -1);
   \draw[-,line width=1pt] (5,-1) -- (5, -0.5);
   \end{scope}
   
  \end{tikzpicture}

	  }
	}{
	  \scalebox{0.8}{\input{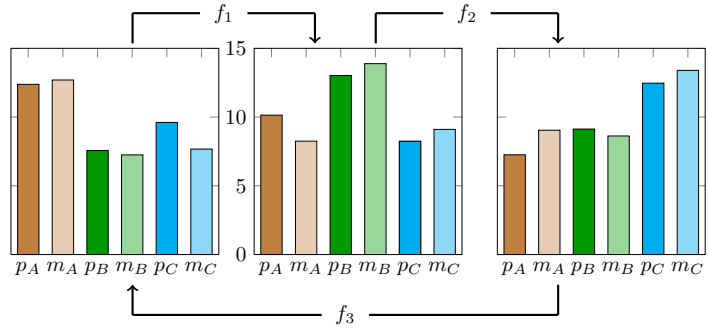}}
	}
	\caption{Solution obtained by cycle clustering.}
	\label{fig:BarRepressilator}
\end{subfigure}
\caption{Average values of the concentrations in each cluster.}
\end{figure}

The resulting MIP problem was significantly larger and computationally more challenging than for the synthetic instances, and SCIP could not solve it to optimality within one hour.
Nevertheless, the best clustering solution returned by SCIP exhibits both large net flow and coherence.
Furthermore, the objective bound guarantees that it is at most a factor of~$1.8$ from the objective of an optimal cycle clustering.

To analyze the clusterings in more detail, consider the projected transition matrices in the notation of Chapter~\ref{subsec:clusteringmodel},
\begin{align*}
\overline{W}_{\text{MIP}}=\begin{pmatrix}
    0.126 &   0.134  &  0.075\\
    0.088 &   0.150  &  0.123\\
    0.121 &   0.077  &  0.107
\end{pmatrix}
\end{align*}
with
\begin{align*}
\overline{W}_{\text{MIP}}-\overline{W}_{\text{MIP}}^{T}= 0.046\cdot
\begin{pmatrix}
0 & 1 & -1\\ -1 & 0 & 1 \\ 1 & -1  &  0
\end{pmatrix}.
\end{align*}
Here, the cycle $\Cluster_1\rightarrow \Cluster_2 \rightarrow \Cluster_3$ can be clearly identified.  In comparison to this, the matrix
\begin{align*}
\overline{W}_{\text{G-PCCA}}=\begin{pmatrix}
    0.082 &   0.090 &   0.103\\
    0.084 &   0.113 &   0.123\\
    0.108 &   0.117 &   0.180
\end{pmatrix}
\end{align*}
does not have such a clear cyclic order and the net flow is one order of magnitude smaller,
\begin{align*}
\overline{W}_{\text{G-PCCA}}-\overline{W}_{\text{G-PCCA}}^{T}=
0.005\cdot\begin{pmatrix}
0 & 1 & -1\\ -1 & 0 & 1 \\ 1 & -1  &  0
\end{pmatrix}.
\end{align*}

In Figure~\ref{fig:BarRepressilatorGP}, we have plotted the arithmetic averages
of the protein and mRNA concentrations in each cluster both for the G-PCCA clustering and the cycle clustering.
In the cycle clustering solution, one can clearly identify a peak of $p_A,m_A$ in cluster~$\Cluster_1$, a peak of $p_B,m_B$ in cluster~$\Cluster_2$, and a peak of $p_C,m_C$ in cluster~$\Cluster_3$.
In contrast, the protein and mRNA concentrations in the G-PCCA clusters seem more uniform and---besides a slightly decreased concentration of $m_C$,  $m_A$, and $m_B$ in cluster $\Cluster_1$, $\Cluster_2$, and $\Cluster_3$, respectively---make it difficult to detect a particular structure.

In this sense, cycle clustering succeeds in separating states in a biologically meaningful way, using only simulation data, without additional knowledge about the defining dynamical system.
Each identified cluster of the system corresponds to a 
biological entity of a protein together with its mRNA.
In constrast, the 
G-PCCA solution does not seem to be interpretable in terms of (separate) 
biological entities. The high productivity of the system is not 
reflected by the objective value of the G-PCCA solution, while the 
MIP solution accounts well for the cyclic nature of the repressilator.

\paragraph{MIP Performance.}
As can be seen from the running time in the last column of~Table~\ref{tab:output}, the improved clustering solutions come at the price of increased computing times.
While G-PCCA took less than a second, SCIP could not solve the MIP problem to optimality within the time limit of one hour.
This is not surprising given Theorem~\ref{thm:complexity}, which proves that cycle clustering is a hard combinatorial optimization problem.

However, good cycle clustering solutions are usually computed very early during the solution process and for the repressilator solution the dual bound proves that the best clustering w.r.t.\ our objective function can be at most $2.6$~times as good as the solution stated in Table~\ref{tab:output}.
In future research, we will focus on improving the performance of MIP solvers for cycle clustering by further dedicated techniques.

\section{Conclusion}
\label{sec:conclusion}

Many non-reversible biological processes seem to be mainly reversible on a small time-scale. 
In order to identify the global cyclic behavior of the system, we 
formulated an optimization problem for partitioning the state 
space into certain cluster which are ``visited'' in a 
non-reversible manner.  Standard approaches which use spectral 
information of the transition matrix $P$ are constructed such that they 
find dominant cycles or the strongest metastabilities, but they do not 
account for non-reversible cycles if these cycles are hidden, i.e., not 
dominant.

We prove that the new clustering method amounts to solving an $\NP$-hard combinatorial optimization problem.
However, computational experiments show that our solution strategy, which uses a mixed-integer programming formulation,
effectively finds optimal or near-optimal clusterings. 
The results for a genetic regulatory network demonstrate that the 
identified clusters are meaningful in the biological context. This clustering was not 
found by standard or spectral approaches.
One reason for that could be, that the problem of finding a global cycle
with maximal net flow is a more complex problem than determining a
dominant cycle with the help of spectral analysis, i.e. Schur-decomposition.

\subsection*{Acknowledgements}

The work for this article has been partly conducted within the Research Campus Modal
funded by the German Federal Ministry of Education and Research (fund number 05M14ZAM).

\bibliographystyle{abbrv}
\bibliography{Bibliography}

\end{document}